\def\@seccntformat#1{\@ifundefined{#1@cntformat}%
  {\csname the#1\endcsname\quad}  
  {\csname #1@cntformat\endcsname}
}
\let\oldappendix\appendix 
\renewcommand\appendix{%
    \oldappendix
    \newcommand{\section@cntformat}{\appendixname~\thesection\quad}
}
\DeclareMathOperator*{\loggrad}{log-grad}
\journalname{Nonlinear Dynamics}
\begin{document}

\title{Collective correlations, dynamics, and behavioural inconsistencies of the cryptocurrency market over time}

\author{Nick James \and Max Menzies}


\institute{   
                N. James \at
              School of Mathematics and Statistics\\
              University of Melbourne\\
              VIC, 3010, Australia\\
              \email{nick.james@unimelb.edu.au} 
              \and
                M. Menzies \at
              Beijing Institute of Mathematical Sciences and Applications\\
              Tsinghua University\\
              Beijing, 101408, China
           }

\date{Received: date / Accepted: date}

\maketitle

\begin{abstract}

This paper introduces new methods to study behaviours among the 52 largest cryptocurrencies between 01-01-2019 and 30-06-2021. First, we explore evolutionary correlation behaviours and apply a recently proposed turning point algorithm to identify regimes in market correlation. Next, we inspect the relationship between collective dynamics and the cryptocurrency market size - revealing an inverse relationship between the size of the market and the strength of collective dynamics. We then explore the time-varying consistency of the relationships between cryptocurrencies' size and their returns and volatility. There, we demonstrate that there is greater consistency between size and volatility than size and returns. Finally, we study the spread of volatility behaviours across the market changing with time by examining the structure of Wasserstein distances between probability density functions of rolling volatility. We demonstrate a new phenomenon of increased uniformity in volatility during market crashes, which we term \emph{volatility dispersion}.

\keywords{Cryptocurrency \and Time series analysis \and Nonlinear dynamics \and  Correlations \and Collective behaviours}

\end{abstract}


\section{Introduction}
\label{Introduction}

Over the last several years, the cryptocurrency market has attracted substantial interest from both institutional and retail investors. The market has experienced significant growth in total assets, accompanied by commensurate levels of volatility. Following the COVID-19 and BitMEX market crash in early 2020, cryptocurrency prices had a strong rally and then a subsequent decline in prices across the market. Given the differing views on their long-term viability, understanding the changing nature of cryptocurrencies' returns and volatility as the market grows in size is a timely priority for investors and policymakers alike. Our paper meets this purpose by analysing the collective dynamics of cryptocurrencies over time regarding returns, volatility, and market size.

Our paper builds on a long literature studying the dynamics of financial markets. 
Across several fields, researchers have been interested in the time-varying nature of financial market behaviours for some time, particularly correlations \cite{Fenn2011,Mnnix2012,Vicente2006,Wang2013,GangJinWang2012,Mikiewicz2021,Pan2007}. The application of classical statistical techniques, such as ARCH and GARCH models \cite{Lamoureux1990,Chu2017,Kumar2019} and other parametric models, can encounter difficulties due to the non-stationary nature of financial markets. Certain models for descriptive analysis or portfolio selection that perform well during extended bull market periods can suffer during sudden market crises. The literature has split into two approaches to this problem. Statistical researchers have introduced explicit methodologies to model non-stationarity \cite{Dahlhaus1997}, while nonlinear dynamics researchers have taken a more descriptive approach to the time-varying dynamics.

In the nonlinear dynamics community, such market dynamics have been studied with a range of methodologies, including chaotic systems \cite{Cai2012,Tacha2018,Szumiski2018}, clustering \cite{Heckens2020,Jamesfincovid}, sample entropy \cite{Wu2021,Chen2021}, and principal components analysis \cite{Laloux1999,Kim2005,JamescryptoEq}. Various asset classes have attracted interest, including equities \cite{Wilcox2007}, fixed income \cite{Driessen2003}, and foreign exchange \cite{Ausloos2000}. Researchers have also explored extreme behaviours \cite{Qi2019} and the propagation of structural breaks \cite{Telli2020,James2021_crypto} in price and volatility time series. The evolutionary nature of volatility, often studied via the framework of volatility clustering and regimes, has been of interest for many years \cite{Shah2019,Kirchler2007,Baillie2009,Hamilton1989,Lavielle,arjun,Lamoureux1990,Guidolin2007,Yang2018JEDC,deZeeuw2012,Kumar2019}. Overall, such financial research uses many of the same techniques from time series analysis that are used in other domains \cite{Hethcote2000,James2021_virulence,Vazquez2006,Mendes2018,Mendes2019,Rizzi2010,Shang2020,jamescovideu,Machado2020,James2021_geodesicWasserstein}.

In recent years, substantial research has focused specifically on the unique dynamics of cryptocurrencies. Various topics of interest include Bitcoin and other cryptocurrencies' price dynamics \cite{Chu2015,Lahmiri2018,Kondor2014,Bariviera2017,AlvarezRamirez2018}, fractal patterns, \cite{Stosic2019,Stosic2019_2,Manavi2020,Ferreira2020}, cross-correlation and scaling effects \cite{Drod2018,Drod2019,Drod2020,Gbarowski2019,Drod2021_entropy,Wtorek2021_entropy}. A great deal of this work has addressed how these dynamics have changed over time, in particular during market crises such as the COVID-19 market crash \cite{Wtorek2020,Corbet2020,Conlon2020,Conlon2020_2,Ji2020,Lahmiri2020,Zhang2020finance,He2020,Zaremba2020,Akhtaruzzaman2020,Okorie2020,Naeem2021,Curto2021,james2021_mobility}.

This paper prioritises understanding the changing nature of time-varying parameters that describe the market dynamics of cryptocurrencies. We are particularly interested in a descriptive analysis of these non-stationary dynamics and an identification of different patterns of behaviour, particularly around crises. Our primary parameters of interest are cryptocurrency returns $R(t)$, volatility $\Sigma(t)$, which have been well-studied and known to be highly non-stationary, as well as market size $M(t)$. This latter quantity is particularly relevant to a nascent market such as cryptocurrencies to determine if market dynamics change meaningfully as the market develops.

This work aims to study the changing dynamics of the cryptocurrency market, incorporating and building on much of the aforementioned literature. We take an interest both across our entire 2.5-year window of analysis as well as in several distinct periods, which are outlined in Section \ref{data}. In Section \ref{Market_correlations}, we begin by building on the rich literature studying cryptocurrencies' significant correlations, proposing a new framework to analyse the correlation structure of the market. In Section \ref{Collective_dynamics_fragmentation}, we investigate the relationship between the collective dynamics of the market and the time-varying total market capitalisation of all cryptocurrencies. From there in Section \ref{return_vol_size_inconsistency}, we study the relationship between market size and returns and volatility, demonstrating greater quantitative consistency between market size and volatility than size and returns. Finally, Section \ref{Volatility_persistence_regime_identification} introduces a new technique for understanding the changing spread of volatility across the cryptocurrency market as a whole, terming this \emph{volatility dispersion}. There we introduce a new quantity  $\text{Var}(\mathbf{p}(t))$ to describe the spread of volatility across a market that we also observe is highly non-stationary. Overall, we propose a new methodology to understand the evolution of correlation structures, reveal new insights about the relationship between market size and attributes such as collective dynamics, returns and volatilities, and propose a new way to understanding the spread of volatility with time. Our insights are summarised in Section \ref{Conclusion}.

\section{Data}
\label{data}
In the proceeding sections, we analyse cryptocurrency data between 01-01-2019 and 30-06-2021. We study 52 cryptocurrencies that possess sufficient histories. In Section \ref{Market_correlations}, we partition our analysis into five discrete periods to explore correlation behaviours at varying times. These periods are defined as follows:
\begin{enumerate}
    \item Pre-COVID: 01-01-2019 to 28-02-2020;
    \item Peak COVID: 01-03-2020 to 30-05-2020;
    \item Post-COVID: 31-05-2020 to 31-08-2020;
    \item Bull: 01-09-2020 to 14-04-2021;
    \item Bear: 15-04-2021 to 30-06-2021. 
\end{enumerate}
Cryptocurrency data are sourced from \url{https://coinmarketcap.com/}. A full list of cryptocurrencies studied in this paper is available in Appendix \ref{appendix:mathematical_objects}.

\section{Temporal evolution of market correlation}
\label{Market_correlations}

Like many asset classes in financial markets, cryptocurrency returns are characterised by distributions that exhibit significant tail risk. In particular, their relative infancy and polarising views of the asset's long-term viability make their price behaviour more susceptible to extreme volatility and erratic behaviours \cite{James2021_crypto}. Figure \ref{fig:market_log_returns} displays the general volatility in log returns, and Figure \ref{fig:market_log_returns_distribution} highlights the negative skew in the returns distribution throughout our analysis window. 

\begin{figure*}
    \centering
    \begin{subfigure}[b]{0.48\textwidth}
        \includegraphics[width=\textwidth]{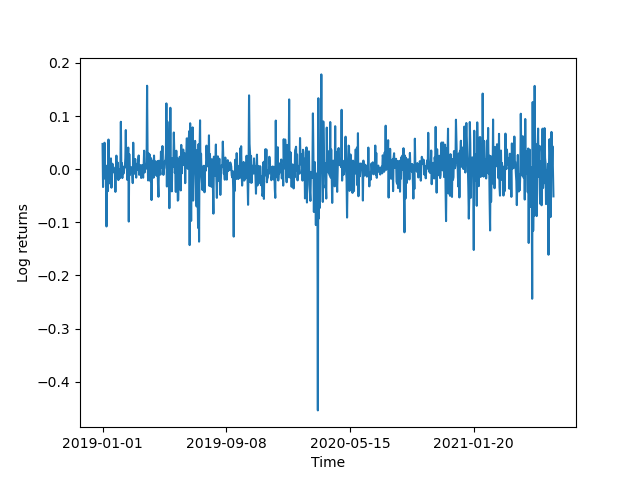}
        \caption{}
    \label{fig:market_log_returns}
    \end{subfigure}
    \begin{subfigure}[b]{0.48\textwidth}
        \includegraphics[width=\textwidth]{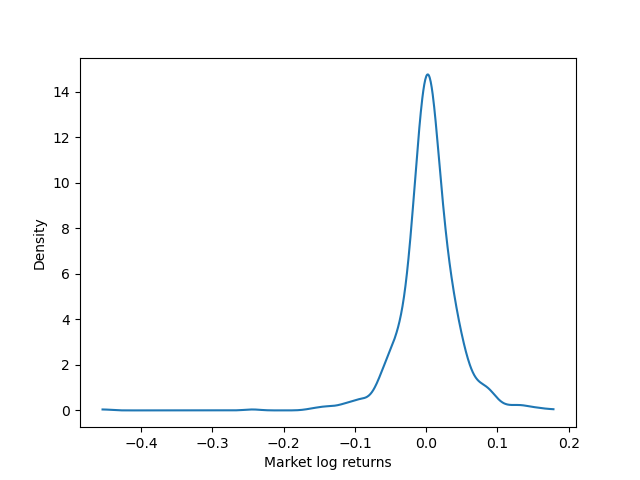}
        \caption{}
    \label{fig:market_log_returns_distribution}
    \end{subfigure}
    \caption{Cryptocurrency market log returns, depicted (a) as a function of time and (b) as a distribution.}
    \label{fig:Market_returns_distribution}
\end{figure*}

Prior research has demonstrated that correlations are most strongly positive during market crises, where many investors engage in the systematic sale of assets \cite{Sandoval2012,james2021_MJW}. This may be due to the growth of quantitatively driven asset managers and their use of algorithms that may induce simultaneous indiscriminate selling. That is, market systematic behaviours, particularly during bear markets, appear to be more pronounced due to an \emph{algorithmic herd mentality}. Behaviour has been particularly volatile since 2020, where after the COVID-19 and BitMEX crash, the cryptocurrency market experienced an unprecedented rise (and subsequent fall) in total market capitalisation. With all this in mind, our objective is to study the temporal evolution of the market's correlation structure and contrast the market's collective similarity between different periods.

\subsection{Analysis of collective correlation over time}
\label{sec:evolutionarycorrelation}
Let our period of analysis 01-01-2019 to 30-06-2021 be indexed $t=0,1,...,T$, where $T=911$. Let $c_i(t), i=1,...,N, t=0,...,T$ be the multivariate time series of cryptocurrency daily closing prices. We first generate a multivariate time series of log returns, $R_i(t), t=1,...,T$, as follows:
\begin{align}
\label{eq:logreturns}
R_{i}{(t)} &= \log \left(\frac{c_i{(t)}}{c_i{(t-1)}}\right).
\end{align}
Our primary objects of study in this section are correlation matrices of log returns data across specified periods. These periods may roll forward with time or remain static. Let $a \leq t \leq b$ be such a period of analysis, an interval of $S=b-a + 1$ days. We standardise the cryptocurrency log returns over such a period by defining $\tilde{R}_i(t) = [R_i(t) - \langle R_i \rangle] / \sigma(R_i), a \leq t \leq b$, where $\langle . \rangle $ is an average and $\sigma(.)$ is a standard deviation operator, each computed over the same interval. The correlation matrix $\Psi$ is then defined as follows: let $\tilde{R}$ be a $N \times S$ matrix defined by $\tilde{R}_{it}=\tilde{R}_i(t), i=1,...,N, t=a,...,b$ and define
\begin{align}
\label{eq:corrmatrix}
\Psi = \frac{1}{S} \tilde{R} \tilde{R}^T.
\end{align}
Explicitly, individual entries are defined by
\begin{align}
\label{eq:rhodefn}
    \Psi_{ij}=\frac{\sum_{t=a}^b (R_i(t) - \langle R_i \rangle)(R_j(t) - \langle{R}_j \rangle))}{\left(\sum_{t=a}^b (R_i(t) - \langle R_i \rangle)^2 \sum_{t=a}^b (R_j(t) - \langle R_j \rangle)^2\right)^{1/2}},
\end{align}
for  $1\leq i, j\leq N$. All entries $\Psi_{ij}$ lie in  $[-1,1]$. If we wish to explicitly note the interval over which these are defined, we may denote this matrix $\Psi^{[a:b]}.$ To quantify the total strength in correlation behaviours across the market, we compute an appropriately normalised $L^1$ norm of the matrix $\Psi$. That is, let
\begin{align}
\label{eq:L1norm}
    \|\Psi\|_1 = \frac{1}{N^2} \sum_{i,j=1}^N | \Psi_{ij}|.
\end{align}
This gives the average absolute correlation of all cryptocurrencies over the interval $a \leq t \leq b$. To explore the temporal evolution of collective strength in correlation behaviours, we examine the changes in matrix $\Psi$ as our interval $[a,b]$ rolls forward. Specifically, we set $S=90$ and compute the time-varying $L^1$ norm of a $90$-day rolling window, $\nu^{\Psi}(t) = \|\Psi^{[t-S+1:t]}\|_1$. We also apply a \emph{Savitzky-Golay} filter to produce a smoothed function $\nu^{\Psi}(t)$. We then apply a recently introduced turning point algorithm \cite{james2020covidusa}, detailed in Appendix \ref{appendix:TPA}, to generate a set of non-trivial local maxima and minima in the total correlation behaviours. While some previous work explores the evolution of the first eigenvalue \cite{JamescryptoEq}, our methodology is the first we know to study the $L^1$ norm of the correlation matrix and apply a bespoke turning point algorithm to study non-trivial peak and trough propagation. The norm function $\nu^{\Psi}(t)$, its smoothed analogue $\nu_s^{\Psi}(t)$, and the detected turning points are all displayed in Figure \ref{fig:Correlation_matrix_norm}.

\begin{figure*}
    \centering
    \includegraphics[width=0.9\textwidth]{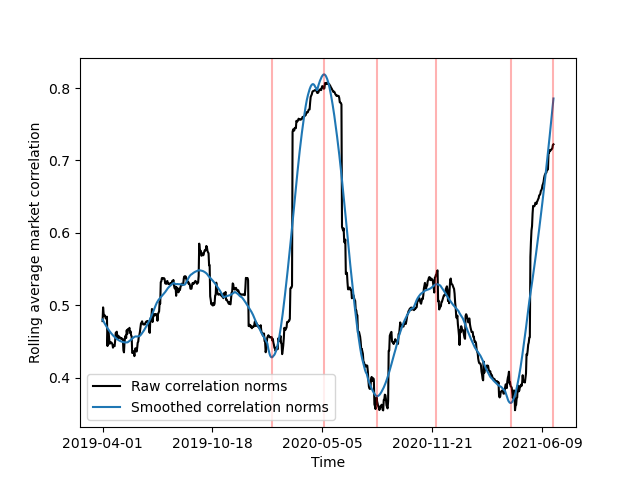}
    \caption{
    Time-varying correlation matrix norms, $\nu^{\Psi}(t)$ and its smoothed counterpart $\nu_{s}^{\Psi}(t)$, as defined in Section \ref{sec:evolutionarycorrelation}. Six non-trivial local maxima and minima are annotated, detected at 03-02-2020, 08-05-2020, 13-08-2020, 29-11-2020, 13-04-2020 and 30-06-2020.
    }
    \label{fig:Correlation_matrix_norm}
\end{figure*}

Examining Figure \ref{fig:Correlation_matrix_norm} reveals six non-trivial local maxima and minima in the overall magnitude of correlations. Of particular note are the local maximum identified on 08-05-2020 and the minimum on 13-08-2020. These dates reflect the COVID-19 market crash and the subsequent recovery in the cryptocurrency market, respectively. Clearly, correlations are most significant during the crash and weakest during the subsequent growth in the market. The local maximum on 30-06-2020 reflects a significant increase in correlation behaviours during the latter part of our analysis window. This is most likely indicative of the aggressive sell-off in cryptocurrency assets from approximately mid-April until the present. Again, we see a striking pattern where the general growth of the cryptocurrency market is inversely related to the collective strength of correlations.

This is broadly consistent with general economic intuition. There is a substantial amount of literature in the quantitative finance and applied mathematics communities that highlights that an increase in correlation among financial securities is observed during times of crisis \cite{JamescryptoEq}. This corresponds to spikes in the first eigenvalue in the correlation matrix - indicating increases in collective market behaviour. When money flows out of the cryptocurrency market, through losses and sale of assets, this would lead to increased correlations and a spike in the first eigenvalue. Thus, the inverse relationship we identify is consistent with what one would expect based on prior research \cite{Fenn2011}.

However, our findings go further and may have fruitful implications for investment managers. Identifying such peaks and troughs in correlation behaviours may provide a decision support tool as to whether they are in a market environment where security selection is of greater or less importance. In market environments where correlations are strongly positive, it may be more difficult for managers to produce return streams that exhibit lower market beta. The approximate periodicity we observe in local maxima and minima may encourage cryptocurrency investors to engage in cyclical patterns of more bullish and bearish investing to avoid exposure in riskier periods.

\subsection{Comparison of particular periods}
\label{sec:intervalscorrelation}

To further elucidate the findings of the previous section, we partition our analysis window into five periods and analyse the correlations separately within each of these windows. Let $\Psi^{\text{Pre}}$, $\Psi^{\text{Peak}}$, $\Psi^{\text{Post}}$, $\Psi^{\text{Bull}}$ be the correlation matrices obtained across the entire time intervals specified in Section \ref{data}. In the notation of Section \ref{sec:evolutionarycorrelation}, $\Psi^{\text{Pre}}=\Psi^{[1:59]}$, and so on.

\begin{figure*}
    \centering
    \includegraphics[width=\textwidth]{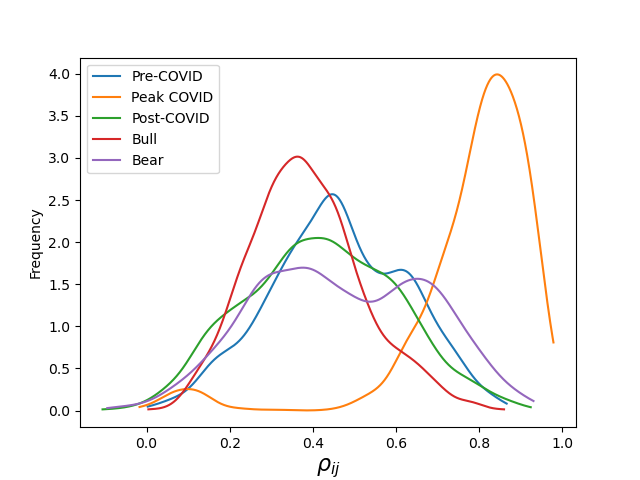}
    \caption{
    Kernel density estimates of correlation matrix entries corresponding to our five time partitions: Pre-COVID, Peak COVID, Post-COVID, Bull and Bear. The Peak COVID correlation matrix has the highest average correlation, indicating extraordinarily high correlations during this period.
    }
    \label{fig:Correlation_matrix_distributions}
\end{figure*}

Figure \ref{fig:Correlation_matrix_distributions} displays kernel density estimates of the entries of our 5 correlation matrices. The findings are generally consistent with Figure \ref{fig:Correlation_matrix_norm}. The distribution of elements $\Psi_{ij}$ in the four matrices $\Psi^{\text{Pre}}$, $\Psi^{\text{Post}}$, $\Psi^{\text{Bull}}$ and $\Psi^{\text{Bear}}$ are highly similar. All four distributions have means in the range $0.38$ to $0.46$ and exhibit similar variability around their means. By contrast, $\Psi^{\text{Peak}}$ has a mean value of $0.78$ - highlighting the spike in correlations during the COVID-19 crisis. One finding of slight surprise is the Bear partition exhibiting correlations more like the Pre-COVID, Post-COVID and Bull windows than the Peak COVID window. It is quite possible that if the dates of the Bear period were brought closer to the present day, including beyond 30-06-2021, this distribution would yield a significantly higher average correlation score. The means and standard deviations of the entries of all five matrices are recorded in Table \ref{tab:table_correlation_matrix}.

The observed increase in correlations during market crises is consistent with prior research that has been published in the literature. However, one new observation is the stark contrast in correlation behaviours in periods directly adjacent to an acute period of crisis. One could imagine that the distribution of cryptocurrency correlations transitions gradually downward following a market crisis. Instead, periods directly after such crises exhibit correlations that are more similar to relatively stable market conditions.

\begin{table}[ht]
\centering
\begin{tabular}{ |p{2cm}|p{2.8cm}|p{2.9cm}|}
 \hline
 Period & Mean of entries $\Psi_{ij}$ & Standard deviation  \\
 \hline
Pre-COVID & 0.456 & 0.164 \\
Peak COVID & 0.784 & 0.166 \\
Post-COVID & 0.421 & 0.182 \\
Bull & 0.383 & 0.135 \\
Bear & 0.421 & 0.182 \\
\hline
\end{tabular}
\caption{Mean and standard deviation of correlation matrix entries across the five periods, as analysed in Section \ref{sec:intervalscorrelation}.}
\label{tab:table_correlation_matrix}
\end{table}

\section{Collective dynamics and market size}
\label{Collective_dynamics_fragmentation} 
In the previous section, we observed weaker correlation behaviours while the market grew and stronger correlation behaviours when the market was in crisis or contracting. In this section, we take a closer examination of the underlying market effect of collective dynamics and how this relates to the current market size.

First, we apply principal components analysis (PCA) to our time-varying correlation matrices. This procedure learns the linear map $\Omega$ such that our standardised returns matrix $\tilde{R}$ is transformed into a matrix of uncorrelated variables $Z$, that is, $Z = \Omega \tilde{R}$. The rows of $Z$ represent the principal components (PCs) of the matrix $\tilde{R}$, while the rows of $\Omega$ consist of the principal component coefficients. The matrix is ordered such that the first row is along the axis of most variation in the data. Its corresponding eigenvalue $\lambda_1$ is thus of substantial practical importance, quantifying the greatest extent of variance in the data. It has been referred to as representative of the collective strength of the market \cite{Fenn2011}. All subsequent PCs, subject to the constraint that they are mutually orthogonal, maximise the variance along their respective axes. Continuing this procedure effectively diagonalises the correlation matrix, $\Psi = E D E^{T}$, where $D$ is a diagonal matrix of eigenvalues and $E$ is an orthogonal matrix. By (\ref{eq:rhodefn}), $\Psi$ is a symmetric positive semi-definite matrix with all eigenvalues real and non-negative, so we may order them $\lambda_1 \geq ... \geq \lambda_N \geq 0$. Each $\lambda_i$ quantifies the extent of variance along the $i$th principal component axis. Thus, we may normalise the eigenvalues by defining $\tilde{\lambda}_i = \frac{\lambda_i}{\sum^{N}_{j=1} \lambda_j}$ to determine the proportion of all variance accounted for by each step in the PCA.

This quantity is related to the norm of the correlation matrix defined in Section \ref{sec:evolutionarycorrelation}. Indeed, by the spectral theorem, the largest (absolute value) eigenvalue of a symmetric matrix coincides with the matrix' \emph{operator norm} \cite{RudinFA}. That is, 
\begin{align}
|\lambda_1|=\|\Psi\|_{op}=\max_{x \in \mathbb{R}^N - \{0\}} \frac{\|\Psi x\|}{\|x\|}.
\end{align}
Next, every diagonal entry of $\Psi$ is equal to 1, so the trace of $\Psi$ is equal to $N$, and thus $\sum^{N}_{j=1} \lambda_j=N$. Hence $\tilde{\lambda}_1 = \frac{1}{N} \|\Psi\|_{op}$. Thus both $\tilde{\lambda}_1$ and $\|\Psi\|_1$, as defined in (\ref{eq:L1norm}), are appropriately normalised norms, with values in $[0,1].$ Succinctly put, $\tilde{\lambda}_1$ is a normalised operator norm while $\|\Psi\|_1$ is a normalised $L^1$ norm. For the remainder of this section, our central object of study is the changing value of $\tilde{\lambda}_1$, which represents both the first proportion of explanatory variance, as well as a normalised operator norm of the matrix.

Just like Section \ref{sec:evolutionarycorrelation}, we set $S=90$ and analyse a 90-day rolling window. Let $\tilde{\lambda}_1(t)$ be the normalised first eigenvalue of the matrix $\Psi^{[t-S+1:t]}$. We plot this over our analysis window in Figure \ref{fig:Collective_strength_vs_market_size}. There are two primary findings of interest. First, consistent with earlier experiments, there is a spike in $\tilde{\lambda}_1(t)$ during early 2020. This reflects the highly correlated behaviours of cryptocurrencies and indiscriminate selling during the COVID-19 pandemic. Subsequently, $\tilde{\lambda}_1(t)$ declines until May 2021. This most evident decline corresponds to the cryptocurrency bull market, where total cryptocurrency assets grew by several orders of magnitude. Next, there is a significant rise in $\tilde{\lambda}_1(t)$ towards the end of our analysis window, corresponding to higher collective strength in the market. This increase occurred contemporaneously with the aggressive sell-off in cryptocurrency assets, suggesting that there may be a relationship between the size of the market and the strength of the underlying collective dynamics. Broadly, the trajectory of $\tilde{\lambda}_1(t)$ is highly similar to that of $\nu^\Psi(t)$ as depicted in Figure \ref{fig:Correlation_matrix_norm}.

To investigate further, we quantitatively incorporate the size of the cryptocurrency market changing over time. Let $M_i(t), i=1,...,N, t=0,...,T$ be the multivariate time series of cryptocurrency market sizes $M_i(t)$ over our analysis window. Due to the significant volatility exhibited by the market, we compute a rolling average of the entire market, defined by \\ $\tilde{M}(t) = \frac{1}{S} \sum^t_{k=t-S+1} \sum^N_{i=1} M_i(k), t= S,...,T$. We include the plot of this varying over time in the same figure, Figure \ref{fig:Collective_strength_vs_market_size}.

While previous work \cite{JamescryptoEq} has studied the first eigenvalue in isolation and compared its properties between cryptocurrency and equity markets, this work is the first we know of to examine its relationship with market size over time. In particular, we reveal an inverse relationship between the size of the market and the first eigenvalue of the correlation matrix $\tilde{\lambda}_1(t)$. To quantify this observation, we compute the correlation between the rolling size of the cryptocurrency market, $\tilde{\lambda}_1(t)$, and the collective strength of the market, $\tilde{M}(t)$. The correlation between these two series is computed to be $\rho^{\tilde{M}, \tilde{\lambda}_1} = -0.122 $. While this cannot show causation, it suggests a possibility that as the market grows in size, the strength of collective dynamics may decline. 

This would be a noteworthy finding with important implications for the future of the cryptocurrency market, especially given the divided views on the future of the market's viability. Suppose one of two contrived scenarios exist; a ``bull case'' where cryptocurrency prices recover and cryptocurrency becomes a systemically important asset class, and a ``bear case'' where prices continue to decline and cryptocurrencies lose the interest of institutional investors. Our findings indicate that in the bull case, behaviours may become increasingly fragmented and heterogeneous, and there will be opportunities for skilful security selection to generate portfolio alpha. In the bear case, where prices decline and the size of the market decreases, behaviours may become more homogeneous. This could mean fewer opportunities for alpha generation through security selection, as correlations will be strongly positive. 


\begin{figure*}
    \centering
    \includegraphics[width=0.95\textwidth]{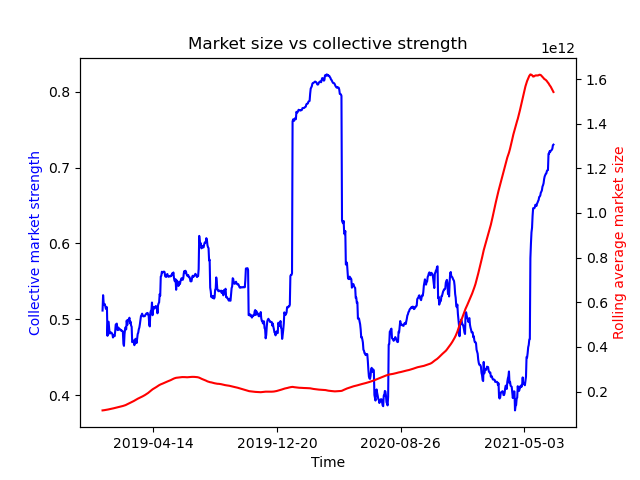}
    \caption{$\tilde{\lambda}_1(t)$, which represents the collective strength of the market and $\tilde{M}(t)$, which represents the size of the cryptocurrency market, as defined in Section \ref{Collective_dynamics_fragmentation}. An inverse relationship is observed with correlation $\rho^{\tilde{M}, \tilde{\lambda}_1} = -0.122$.}
    \label{fig:Collective_strength_vs_market_size}
\end{figure*}

\section{Inconsistency analysis between market size, returns and volatility}
\label{return_vol_size_inconsistency} 

We now extend our study of cryptocurrency market sizes to incorporate their relationship with returns and volatility behaviours individually. Specifically, we investigate the consistency among cryptocurrencies between the attributes of market size, returns and volatility, and how this changes over time. For this purpose, we define three distance matrices and appropriately normalise them.

Let our full period of analysis 01-01-2019 to 30-06-2021 be indexed $t=0,1,...,T$, where $T=911$. Let $M_i(t)$, $t=0,...,T$, be the multivariate time series of market size on each day and let $R_i(t), t=1,...,T$, be the multivariate time series of log returns, as defined by (\ref{eq:logreturns}). Let $S=90$ days and let $\sigma_i(t), t=S,...,T$ be the multivariate time series of 90-day rolling volatility. At each $t$, this is defined as the standard deviation of the log returns of the prior 90 days. Now, we may construct distance matrices for each $t=S,...,T$ as follows:
\begin{align}
\label{eq:marketdiff}
    D^{M}_{ij}(t) = \frac{1}{S}  \left| \sum^t_{k=t-S+1} [M_i(k) - M_j(k)] \right|; \\
    \label{eq:returndiff}
    D^{R}_{ij}(t) =  \left|\sum^t_{k=t-S+1} [R_i(k) - R_j(k)] \right| ; \\
    D^{\Sigma}_{ij}(t) = \left| \sigma_i(t) - \sigma_j(t) \right|.
\end{align}
Thus, $D^M(t), D^R(t)$ and $D^\Sigma (t)$, respectively, measure discrepancy between cryptocurrencies with respect to average market size, total returns, and rolling volatility, each over the 90-day period concluding on day $t$, just like our study of correlation norms and market size in Sections \ref{sec:evolutionarycorrelation} and \ref{Collective_dynamics_fragmentation}. We now convert these three distance matrices into affinity matrices whose elements lie in $[0,1]$:
\begin{align}
    A^{M}_{ij}(t) = 1 - \frac{D_{ij}^{M}(t)}{\max_{kl}\{ D^{M}_{kl}(t) \}}; \\
    A^{R}_{ij}(t) = 1 - \frac{D_{ij}^{R}(t)}{\max_{kl}\{ D^{R}_{kl}(t) \}}; \\
    A^{\Sigma}_{ij}(t) = 1 - \frac{D_{ij}^{\Sigma}(t)}{\max_{kl}\{ D^{\Sigma}_{kl}(t) \}}.
\end{align}
These affinity matrices are appropriately normalised and can be compared directly to study the consistency between cryptocurrency market size and returns or volatility. We generate two \emph{inconsistency matrices} as follows:
\begin{align}
    \text{INC}^{M,R}(t) =A^{M}(t) - A^{R}(t); \\
    \text{INC}^{M,\Sigma}(t) =A^{M}(t) - A^{\Sigma}(t).
\end{align}
Larger absolute values of the entries of $\text{INC}^{M,R}$ indicate that the relationship between two cryptocurrencies regarding market size and returns is quite different, analogously for $ \text{INC}^{M,\Sigma}$. To study the degree of consistency between these attributes in totality across our collection, we compute the $L^1$ norm of the resulting inconsistency matrices and study how these norms evolve over time. That is, for $t=S,...,T$, we compute an analogous quantity as defined in (\ref{eq:L1norm}):
\begin{align}
    \nu_{M,R}^{INC}(t) = \| INC^{M,R}(t) \|; \\
    \nu_{M,\Sigma}^{INC}(t) = \| INC^{M,\Sigma}(t) \|. 
\end{align}

Figure \ref{fig:CCA_time} displays the time-varying inconsistency norms. It is clear that throughout our period of analysis, there is greater consistency between volatility and size than returns and size, as indicated by the smaller values of $\nu_{M,\Sigma}^{INC}(t)$. This is what one would expect in a more established asset class such as equities. For instance, large-cap equities typically exhibit lower volatility than small-cap equities, creating some consistency between market size and volatility. By contrast, returns and size are shown to be significantly more inconsistent, highlighting that the relationship between size and returns is less clear than that between size and volatility. Furthermore, it suggests that the size of cryptocurrencies is by no means a good representation of the future expectation of returns. 

Previous work \cite{James2021_crypto} has used inconsistency matrices to study the extent of inconsistency between returns and volatility. There, we identified the most anomalous individual cryptocurrencies in these attributes. In this work, we take quite a different direction. First, our inconsistency matrices are time-varying. Next, rather than identifying individual cryptocurrencies, we study temporal trends in the collective extent of inconsistency by examining the inconsistency matrices' norms as a function of time. Further, we compare the inconsistency between returns and market size with volatility and market size, incorporating a new parameter. Our finding is also new and of interest, namely that size-returns is more inconsistent than size-volatility. Such a finding could be of great interest to risk managers looking to find factors or exposures their portfolios are most in need of diversifying away from.

\begin{figure*}
    \centering
    \includegraphics[width=0.95\textwidth]{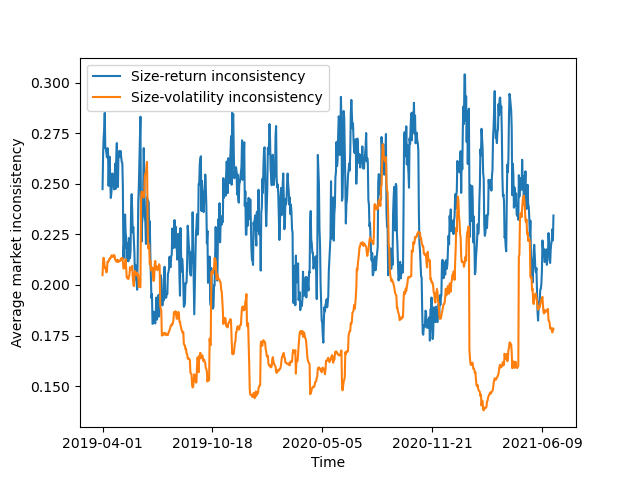}
    \caption{Time-varying inconsistency matrix norms between market size and returns, $\nu_{M,R}^{INC}$, and size and volatility, $\nu_{M,\Sigma}^{INC}$, as defined in Section \ref{return_vol_size_inconsistency}. The size-volatility inconsistency is essentially always lower, indicating more consistency in relationships between cryptocurrencies regarding their size and volatility than their size and returns.}
    \label{fig:CCA_time}
\end{figure*}

\section{Temporal changes and dispersion of volatility}
\label{Volatility_persistence_regime_identification} 

Having identified greater consistency in volatility and market size behaviours, we now more closely examine the structure of collective volatility over time. The behaviour of volatility and the general identification of regimes in financial markets is a topic of great interest. Many parametric statistical models, such as regime-switching models, assume a fixed number of volatility regimes for a candidate modelling problem. Often the selection of this number is quite arbitrary. Like all parametric models, if assumptions are misspecified, the resulting estimates can be highly inaccurate. 

We take a different approach to the analysis of collective volatility behaviours and detect a new phenomenon that we term \emph{volatility dispersion}. To do so, set our window length as $S=90$ days. For each $t=S,...,T$, we may consider the 90-day rolling volatilities $\sigma_i(t)$, as discussed in the previous section. For a fixed $t$, we normalise the vector $(\sigma_1(t),...,\sigma_N(t)) \in \mathbb{R}^N$ by its total sum to produce a probability vector $\mathbf{p}(t)$ that measures the concentration of volatility across a selection of cryptocurrencies. That is, let
\begin{align}
    p_i(t) = \frac{\sigma_i(t)}{\sum_{j=1}^N \sigma_j(t)}.
\end{align}
For example, if $\mathbf{p}(t)=\frac{1}{N}(1,1,...,1) \in \mathbb{R}^N$, this indicates that the 52 cryptocurrencies have identical volatility measured over the past 90-days, while a value of $(1,0,...,0)\in \mathbb{R}^N$ indicates that all volatility is observed in the first currency, with none in any of the others. Thus, $p_i(t)$ is a measure not of the absolute size of volatility but the proportional contribution of one cryptocurrency to the total volatility of the collection.

With these probability vectors $\mathbf{p}(t),t=S,...,T$, we define a $(T-S +1) \times (T-S +1)$ distance matrix using the Wasserstein distance between these distributions at all points in time. That is, let $d^W$ be the $L^1$-Wasserstein metric \cite{DelBarrio}, and let 
\begin{align}
    D^{vol}(s,t) = d^{W} (\mathbf{p}(s), \mathbf{p}(t)) \quad \forall s,t, \in [S,...,T].
\end{align}
We then apply hierarchical clustering to our distance matrix, $D^{vol}(s,t)$, and study the resulting dendrogram Figure \ref{fig:Volatility_dendrogram}. It is worthy to note that the Wasserstein distance, unlike the $L^1$ norm between vectors, does not distinguish probability vectors based on their order; it is essentially a distance between vectors as sets (possibly with repetition). We make this choice because we are not interested in distinguishing periods where a particular cryptocurrency is highly volatile, but whether the volatility is, broadly speaking, spread out or concentrated among the collection as a whole.

\begin{figure*}
    \centering
    \begin{subfigure}[b]{0.91\textwidth}
        \includegraphics[width=\textwidth]{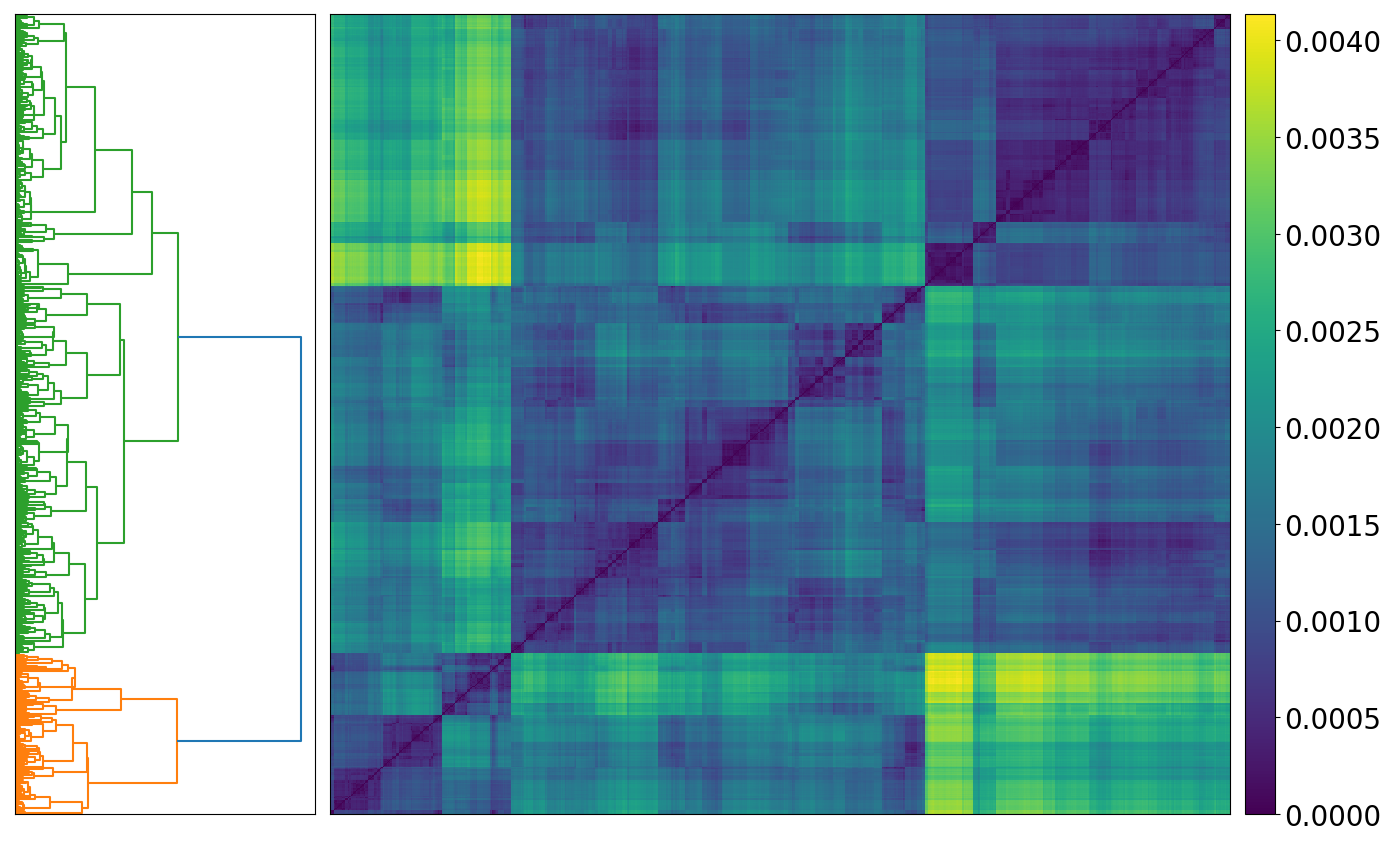}
        \caption{}
   \label{fig:Volatility_dendrogram}
    \end{subfigure}
    \begin{subfigure}[b]{0.91\textwidth}
        \includegraphics[width=\textwidth]{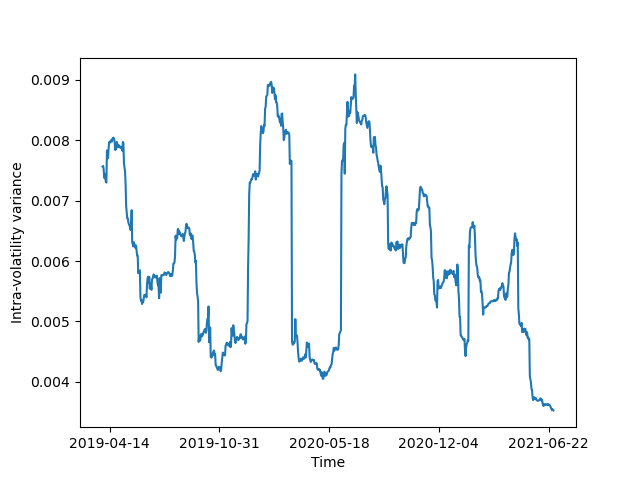}
        \caption{}
   \label{fig:Volatility_variance}
    \end{subfigure}
    \caption{In (a), we perform hierarchical clustering on the matrix $D^{vol}$ between normalised vectors of rolling volatility. The dendrogram groups dates $s,t \in [S,...,T]$ according to similarity between their corresponding vectors $\mathbf{p}(s)$ and $\mathbf{p}(t)$. Two clusters are observed, with the secondary cluster associated to times $t$ observed during COVID-19. In (b), we elucidate this finding more closely, plotting the variance of the probability vector $\mathbf{p}(t)$ over time. The variance of this vector is lower during COVID-19, indicating vectors where all volatilities are closer together. That is, the total volatility of the market is spread out more over all the constituent cryptocurrencies. We term this increased volatility dispersion.}
    \label{fig:Volatility_Wasserstein}
\end{figure*}

Figure \ref{fig:Volatility_dendrogram} highlights several interesting findings. First, two volatility clusters are identified, one dominant cluster of volatility behaviours (with two subclusters) and a smaller cluster that is highly different to the rest of the collection. The latter cluster consists of probability vectors $\mathbf{p}(t)$ generated at times within the COVID-19 market crash. Interestingly, this cluster does not display overwhelming self-similarity; instead, it is relatively diffuse but exhibits significant differences with the concentrated subcluster of the majority cluster. The anomalous behaviours of the COVID-19 market crash are further demonstrated, with such profound distances to other (highly variable) periods in the market over the past several years. 

To further investigate the nature of this split, we perform a closer analysis of the probability vectors $\mathbf{p}(t)$ over time. Considered as a distribution over $[0,1]$, we compute the time-varying \emph{intra-volatility variance} \\ $\text{Var}(\mathbf{p}(t))$. We display this as a function of time in Figure \ref{fig:Volatility_variance}. This supports the separation of behaviours seen in Figure \ref{fig:Volatility_dendrogram}. The COVID-19 market crash exhibits markedly lower variance among the individual rolling volatilities $p_i(t)$. Interestingly, this is also observed towards the end of our analysis window. A lower value of the variance of $\mathbf{p}(t)$ means that the contribution to the variance of each individual cryptocurrency is more uniform across the collection. That is, the proportion of the market's total volatility is more spread out among all the cryptocurrencies during the COVID-19 market crisis. While it is predictable that the absolute volatility would spike during a crisis, it is an unexpected finding that there would be less deviation between the different volatilities of individual cryptocurrencies - essentially, everything is similarly volatile together.

In Appendix \ref{sec:proofs}, we present two theoretical results on the distances $D^{vol}(s,t)$ between normalised volatility vectors and intra-volatility variance $\text{Var}(\mathbf{p}(t))$. These propositions identify the \emph{uniform distribution of volatility} $\mathbf{p}_0=\frac{1}{N}(1,1,...,1) \in \mathbb{R}^N$ and the \emph{one-shot distributions of volatility} $\mathbf{q}_k=(0,...,0,1,0,...,0)$ as the two extremal possible spreads of volatility. The uniform has the lowest intra-volatility variance, a one-shot has the greatest intra-volatility variance, and the greatest possible value of the Wasserstein distance is between the uniform and a one-shot distribution. These propositions demonstrate that, in a precise sense, our study of volatility dispersion investigates the extent that rolling volatility vectors sit between two extremes: the case where all the volatility of the market is uniformly distributed across every asset, and a case where all volatility is concentrated in a single asset.


Fund managers are often warned of the potential issues when over-fitting models to study parameters such as volatility. The volatility dispersion framework we introduce in this section supports the ``Occam's Razor'' principle familiar to many in statistical learning. Rather than trying to capture the true complexity of the volatility process, using just two regimes to capture low and high volatility periods, may work best. However, the interpretations of volatility dispersion are new and different from those gained from traditional volatility clustering, indicating periods where volatility is relatively uniform and less avoidable regardless of asset selection.

\section{Conclusion}
\label{Conclusion}

In Section \ref{Market_correlations}, we study the time-varying evolution of correlations among our collection of cryptocurrencies, and explicitly compare the distribution of correlation coefficients over five discrete time windows. The most notable findings were the spike in correlations during the COVID-19 market crash and the drop in correlations during the subsequent bull market run in late 2020 to early 2021. Broadly, both experiments in this section allude to a clear association between strong collective correlation among cryptocurrencies and periods of declining value in the market. 

In Section \ref{Collective_dynamics_fragmentation}, we investigate the aforementioned association more closely. First, we explore the time-varying explanatory variance of the correlation matrix's first eigenvalue $\tilde{\lambda}_1(t)$, where a noticeable spike is seen during the COVID-19 pandemic. Next, we directly compare $\tilde{\lambda}_1(t)$ with the rolling size of the cryptocurrency market, and show a negative correlation therein. This suggests that as cryptocurrency assets rise, the strength of collective dynamics may weaken. Thus, in a scenario where cryptocurrency assets rise significantly and the asset class gains further prominence, the strength of collective dynamics may decline - leading to more heterogeneous behaviours. This would place greater importance on high-quality security selection when investing in cryptocurrencies.

In Section \ref{return_vol_size_inconsistency}, our experiments reveal greater consistency in volatility and market size behaviours, wherein cryptocurrencies similar in market size are more likely to exhibit commensurate levels of volatility. By contrast, there is less consistency between cryptocurrency size and returns - suggesting that the cryptocurrencies' size do not provide a good indication of future expected returns.

Finally, in Section \ref{Volatility_persistence_regime_identification}, we study the structure of volatility behaviours over time, applying hierarchical clustering to the distances between distributions of rolling volatility at all points in time. Our technique suggests that there are two volatility patterns - times where the total volatility of the market is more dispersed across the entire collection, and times where it is more concentrated in fewer particular cryptocurrencies. We reveal that the COVID-19 market crash not only features higher volatilities in general, but that the total volatility is more evenly spread across all individual cryptocurrencies. This technique could be used as an accompanying tool to estimate the number of regimes in more traditional, parametric regime-switching models in the econometric and statistical modelling literature. Volatility dispersion also provides independent confirmation of and a new approach to studying increased heterogeneity of market dynamics during crises, complementing our study of collective correlations. It is effectively a more direct measure of assets being uniformly volatile together than the first eigenvalue of the correlation matrix.

Many possibilities exist for future work building upon the techniques and findings of this paper. First, one could study the associations between market size and the strength of collective dynamics in the cryptocurrency market with alternative methodologies. For example, suitable distances between normalised trajectories could replace the use of correlations. Second, one could compare studied relationships between the size of the cryptocurrency market and its underlying dynamics with those of more traditional asset classes. Third, given the cryptocurrency market's relative infancy, our findings may turn out to be transient; if the market continues to grow, perhaps the inverse relationship between total market size and collective dynamics will not hold in future crises.

Several of our new methodologies and findings may have particular promise in future research and applications. Our time-varying analysis of the total extent of inconsistency between parameters could reveal suitable predictors to incorporate in trading strategies either aimed at maximising returns or minimising volatility. Our volatility dispersion analysis is also promising. Unlike typical methods of volatility clustering or regime-switching models, we compare similarity between all windows of time, not just adjacent periods. In this paper, our analysis has been entirely descriptive, but future work could employ it in a predictive context where we expect volatility to be rather uniform across the market. Such times could prompt an entire withdrawal of funds to safe-haven assets such as gold or cash, as a uniform spread of volatility could mean any investment in the cryptocurrency market would carry significant risk. Finally, volatility dispersion could be applied to other financial and economic securities beyond cryptocurrencies. For instance, one could identify clusters of macroeconomic behaviour using data such as interest rates, GDP, inflation, unemployment, and others. One could explore the dispersion of these factors individually, or identify clusters of economic behaviour with a higher dimensional distance measure where a variety of metrics are incorporated.


Overall, this paper reveals several key relationships between cryptocurrencies' collective dynamics, market size, returns, and volatility and analyses these behaviours over time. During COVID-19 and towards June 30, 2021, correlation behaviours are stronger, and volatility is more uniformly spread across the entire market. Individual cryptocurrencies' market sizes are shown to be more consistent with volatility than returns, while the total market size is inversely associated with the quantifiable strength of collective dynamics. Both the lack of consistency between market size and returns and the high correlations across the market during crises present significant challenges to investors aiming to select optimal portfolios of cryptocurrencies on either a long or short-term basis.

\begin{acknowledgements}
The authors would like to thank Georg Gottwald for some helpful discussions. 

\end{acknowledgements}

\section*{Declarations}

\textbf{Funding}: no specific funding was received for this manuscript.

\noindent \textbf{Conflicts of interest}: the authors have no conflicts of interest to report.

\noindent \textbf{Availability of data and material}: all data are publicly available at \url{https://coinmarketcap.com/}

\noindent \textbf{Authors' contributions}: each author is an equal first author, playing an equal role in every aspect of the manuscript.

\appendix

\section{Securities analysed}
\label{appendix:mathematical_objects}

In Table \ref{tab:CryptocurrencyTickers}, we list the 52 cryptocurrencies analysed in this paper, both their tickers and names.

\begin{table*}[ht]
\centering
\begin{tabular}{ |p{1.65cm}|p{3.5cm}|p{1.65cm}|p{3.6cm}|}
 \hline
 Ticker & Coin Name & Ticker & Coin Name \\
 \hline
 BTC & Bitcoin & WAVES & WAVES \\
 ETH & Ethereum & CEL & Celsius \\
 BNB & Binance Coin & DASH & Dash  \\
 ADA & Cardano & ZEC & Zcash \\
 XRP & XRP & MANA & Decentraland \\
 DOGE & Dogecoin & ENJ & Enjin Coin \\
 BCH & Bitcoin Cash & HOT & Holo \\
 LTC & Litecoin & QNT & Quant \\
 LINK & Chainlink & KCS & KuCoin \\
 ETC & Ethereum Classic & NEXO & Nexo \\
 XLM & Stellar & BAT & Basic Attention Token \\
 THETA & Theta & ZIL & Zilliqa \\
 VET & VeChain & BTG & Bitcoin Gold\\
 FIL & Filecoin & BNT & Bancor \\
 TRX & Tron & ONT & Ontology \\
 SMR & Monero & ZEN & Horizen \\
 EOS & EOS & SC & Siacoin \\
 CRO & Crypto.com & DGB & Digibyte \\
 MKR & Maker & QTUM & QTUM \\
 BSV & Bitcoin SV & CHSB & SwissBorg \\
 NEO & NEO & ZRX & 0x \\
 XTZ & Tezos & RVN & Ravencoin \\
 MIOTA & IOTA &  OMG &  OMG Network \\ 
 DCR & Decred &  NANO &  Nano \\ 
 HT & Huobi Token & ICX  & ICON  \\ 
 XEM & NEM & FTM &  Fantom \\ 
\hline
\end{tabular}
\caption{Cryptocurrency tickers and names}
\label{tab:CryptocurrencyTickers}
\end{table*}

\section{Turning point algorithm}
\label{appendix:TPA}

In this section, we provide more details for the identification of turning points (local maxima and minima) of the smoothed function $\nu_s^\Psi(t)$, used in Section \ref{Market_correlations}. We begin by linearly adjusting this function to ensure that the global minimum of the resulting function is 0. Specifically, let $\hat{\nu}(t)= \nu_s ^{\Psi}(t) - \min_t \{\nu_s^{\Psi}(t)\}, t=S,...,T$.

Following \cite{james2020covidusa}, we apply a two-step algorithm to the smoothed and minimum-adjusted $\hat{\nu}(t)$. The first step produces an alternating sequence of local minima (troughs) and local maxima (peaks), which may include some immaterial turning points. The second step refines this sequence according to chosen conditions and parameters. The most important conditions to initially identify a peak or trough, respectively, are the following:
\begin{align}
\label{baddefnpeak}
\hat{\nu}(t_0)&=\max\{\hat{\nu}(t): \max(0,t_0 - l) \leq t \leq \min(t_0 + l,T)\},\\
\label{baddefntrough}\hat{\nu}(t_0)&=\min\{\hat{\nu}(t): \max(0,t_0 - l) \leq t \leq \min(t_0 + l,T)\},
\end{align}
where $l$ is a parameter to be chosen. We follow \cite{james2020covidusa} in their choice of $l=17$. Defining peaks and troughs according to this definition alone has some flaws, such as the potential for two consecutive peaks.

Instead, we implement an inductive procedure to choose an alternating sequence of peaks and troughs. Suppose $t_0$ is the last determined peak. We search in the period $t>t_0$ for the first of two cases: if we find a time $t_1>t_0$ that satisfies (\ref{baddefntrough}) as well as a non-triviality condition $\hat{\nu}(t_1)<\hat{\nu}(t_0)$, we add $t_1$ to the set of troughs and proceed from there. If we find a time $t_1>t_0$ that satisfies (\ref{baddefnpeak}) and  $\hat{\nu}(t_0)\geq \hat{\nu}(t_1)$, we ignore this lower peak as redundant; if we find a time $t_1>t_0$ that satisfies (\ref{baddefnpeak}) and  $\hat{\nu}(t_1) > \hat{\nu}(t_0)$, we remove the peak $t_0$,  replace it with $t_1$ and continue from $t_1$. A similar process applies from a trough at $t_0$. 

At this point, the time series is assigned an alternating sequence of troughs and peaks. However, some turning points are immaterial and should be removed. Let $t_1<t_3$ be two peaks, necessarily separated by a trough. We select a parameter $\delta=0.2$, and if the \emph{peak ratio}, defined as $\frac{\hat{\nu}(t_3)}{\hat{\nu}(t_1)}<\delta$, we remove the peak $t_3$. If two consecutive troughs $t_2,t_4$ remain, we remove $t_2$ if $\hat{\nu}(t_2)>\hat{\nu}(t_4)$, otherwise remove $t_4$. That is, if the second peak has size less than $\delta$ of the first peak, we remove it.

Finally, we use the same \emph{log-gradient} function between times $t_1<t_2$, defined as
\begin{align}
\label{loggrad}
   \loggrad(t_1,t_2)=\frac{\log \hat{\nu}(t_2) - \log \hat{\nu}(t_1)}{t_2-t_1}.
\end{align}
The numerator equals  $\log(\frac{\hat{\nu}(t_2)}{\hat{\nu}(t_1)})$, a "logarithmic rate of change." Unlike the standard rate of change given by $\frac{\hat{\nu}(t_2)}{\hat{\nu}(t_1)} -1$, the logarithmic change is symmetrically between $(-\infty,\infty)$. Let $t_1,t_2$ be adjacent turning points (one a trough, one a peak). We choose a parameter $\epsilon=0.01$;  if
\begin{align}
    |\loggrad(t_1,t_2)|<\epsilon,
\end{align}
that is, the average logarithmic change is less than 1\%, we remove $t_2$ from our sets of peaks and troughs. If $t_2$ is not the final turning point, we also remove $t_1$. After these refinement steps, we are left with an alternating sequence of non-trivial peaks and troughs - these have been marked on Figure \ref{fig:Correlation_matrix_norm}.

\section{Theoretical properties of volatility dispersion}
\label{sec:proofs}

In the two propositions below, let $\mathbf{p}(t)$ be an arbitrary probability vector of volatilities. That is, $\mathbf{p}(t)=(a_1,...,a_N)\in \mathbb{R}^N$ with $a_i \geq 0$ and $\sum a_i = 1$. Let the \emph{uniform distribution of volatility} be the probability vector given by $\mathbf{p}_0=\frac{1}{N}(1,1,...,1) \in \mathbb{R}^N$. For any $j=1,...,N$, let the \emph{one-shot distribution of volatility} on asset $j$ be $\mathbf{q}_k=(0,...,0,1,0,...,0)$, where the $k$-th coordinate is a 1. That is, $\mathbf{p}_0$ signifies a uniform spread of volatility across all assets, while $\mathbf{q}_k$ represents a spread of volatility exhibited entirely by one asset, with all other assets having zero volatility.

We prove two properties concerning the distance $D^{vol}(s,t) = d^{W} (\mathbf{p}(s), \mathbf{p}(t))$ and the intra-volatility variance $\text{Var}(\mathbf{p}(t))$, as defined in Section \ref{Volatility_persistence_regime_identification}.

\begin{proposition}
Let $\mathbf{p}(t)$ be a volatility vector. The least possible value of $\text{Var}(\mathbf{p}(t))$ is zero, exhibited uniquely by the uniform $\mathbf{p}_0$. The greatest possible value is $1 - \frac{1}{N^2}$, exhibited only by the one-shot distributions $\mathbf{q}_k$.
\end{proposition}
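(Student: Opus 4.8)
The plan is to collapse both claims to a single scalar optimisation and then exploit convexity. First I would record that, because $\mathbf{p}(t)=(a_1,\dots,a_N)$ is a probability vector, the empirical mean of its entries is forced to equal $\frac1N$; consequently the intra-volatility variance is a strictly increasing affine function of the single quantity $\sum_i a_i^2$, namely
\begin{align}
\text{Var}(\mathbf{p}(t)) = \frac{1}{N}\sum_{i=1}^N\Bigl(a_i-\frac1N\Bigr)^2 = \frac{1}{N}\sum_{i=1}^N a_i^2 - \frac{1}{N^2}.
\end{align}
Both assertions therefore reduce to finding the minimum and maximum of $\sum_i a_i^2$ over the simplex $\Delta=\{a:\ a_i\ge0,\ \sum_i a_i=1\}$, together with the configurations that attain them.

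For the minimum I would invoke the Cauchy--Schwarz (equivalently, power-mean) inequality in the form $1=(\sum_i a_i)^2\le N\sum_i a_i^2$, giving $\sum_i a_i^2\ge\frac1N$ with equality exactly when all entries coincide, i.e. at the uniform $\mathbf{p}_0$. For the maximum I would use the equally elementary bound
\begin{align}
\sum_{i=1}^N a_i^2 \le \Bigl(\max_i a_i\Bigr)\sum_{i=1}^N a_i = \max_i a_i \le 1,
\end{align}
where the final step uses $a_i\le\sum_j a_j=1$. Substituting the extremal values $\sum_i a_i^2=\frac1N$ and $\sum_i a_i^2=1$ back into the identity then pins down the minimal variance (attained at $\mathbf{p}_0$) and the maximal variance (attained at the one-shots $\mathbf{q}_k$).

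The step needing the most care is the \emph{uniqueness} of the maximiser, i.e. showing the one-shots are the \emph{only} maximisers. I would track the equality conditions through the displayed chain: equality in $\sum_i a_i^2\le(\max_i a_i)\sum_i a_i$ forces every entry with $a_i>0$ to equal $\max_j a_j$, while equality in $\max_i a_i\le1$ forces this common value to be $1$; combined with $\sum_i a_i=1$ this leaves exactly one entry equal to $1$ and all others zero, which is precisely some $\mathbf{q}_k$, and conversely each $\mathbf{q}_k$ meets the bound. A cleaner alternative is to note that $\sum_i a_i^2$ is strictly convex, so its maximum over the compact polytope $\Delta$ is attained only at extreme points, and the extreme points of $\Delta$ are exactly the standard basis vectors $\mathbf{q}_1,\dots,\mathbf{q}_N$. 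The uniqueness of the minimiser follows immediately from the strict-convexity (equality) case of the Cauchy--Schwarz step.
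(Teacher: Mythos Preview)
Your reduction to optimising $\sum_i a_i^2$ over the simplex is exactly the paper's strategy, and your Cauchy--Schwarz argument for the minimum matches the paper's direct sum-of-squares observation. For the maximum the paper takes a slightly different route: rather than bounding $\sum_i a_i^2\le(\max_i a_i)\sum_i a_i\le 1$, it adds back the non-negative cross terms $2\sum_{i<j}a_ia_j$ to obtain $\sum_i a_i^2\le(\sum_i a_i)^2=1$, with equality precisely when all cross products vanish, i.e.\ at a one-shot. Both devices are elementary and yield the same equality analysis; your convexity alternative for uniqueness is a clean extra that the paper does not invoke.

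One point to watch: your displayed identity carries a $\tfrac{1}{N}$ prefactor that the paper's convention does not. The paper sets $\text{Var}(\mathbf{p}(t))=\sum_i(a_i-\tfrac1N)^2$ without averaging, and the stated maximum $1-\tfrac{1}{N^2}$ does not follow from your normalisation, which would give $\tfrac{1}{N}-\tfrac{1}{N^2}$ at a one-shot. (Indeed, the paper's own two displayed expressions for $\text{Var}$ are mutually inconsistent, so the numerical constant in the proposition is already suspect; the structural claim about which vectors extremise the variance, which is what your argument cleanly establishes, is unaffected either way.)
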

\begin{proof}
The variance can be expressed as
\begin{align}
\text{Var}(\mathbf{p}(t))= \sum_{i=1}^N \left(a_i - \frac{1}{N}\right)^2,
\end{align}
which is clearly at most zero, equality if and only if all $a_i=\frac{1}{N}$, that is, for $\mathbf{p}(t)=\mathbf{p}_0$.

An alternative formulation of the variance is
\begin{align}
\text{Var}(\mathbf{p}(t))&= \sum_{i=1}^N a_i^2 - \frac{1}{N^2} \\
&\leq \sum_{i=1}^N a_i^2 + 2\sum_{i<j} a_{i} a_{j} - \frac{1}{N^2} \\
&= (\sum_{i=1}^N a_i)^2 - \frac{1}{N^2} \\
& = 1 - \frac{1}{N^2}.
\end{align}
Equality is obtained if and only if all all cross terms $a_i a_j=0$ for $i \neq j$. This means that only one value of $a_i$ may be non-zero, say $i=k$. By the condition that $a_i$ sum to 1, this non-zero value must be 1, so $\mathbf{p}(t)= \mathbf{q}_k$.
\end{proof}

\begin{proposition}
The greatest possible distance $D^{vol}(s,t) =\\ d^{W} (\mathbf{p}(s), \mathbf{p}(t))$ occurs between the uniform distribution $\mathbf{p}_0$ and a one-shot distribution $\mathbf{q}_k$. Specifically,
\begin{align}
    D^{vol}(s,t) \leq \frac{2}{N} \left( 1 - \frac{1}{N}\right),
\end{align}
and equality occurs only if $\{\mathbf{p}(t), \mathbf{p}(s) \} = \{ \mathbf{p}_0, \mathbf{q}_k \}$ as sets.
\end{proposition}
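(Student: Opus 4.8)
The plan is to reduce the Wasserstein distance to an explicit combinatorial quantity and then maximise it by a one-line estimate. First I would invoke the standard description of the one-dimensional $L^1$-Wasserstein metric between two empirical measures of equal weight: writing $a_{(1)} \le \dots \le a_{(N)}$ and $b_{(1)} \le \dots \le b_{(N)}$ for the sorted entries of $\mathbf{p}(s)$ and $\mathbf{p}(t)$, the optimal transport plan matches order statistics, so $D^{vol}(s,t) = \frac{1}{N}\sum_{i=1}^N |a_{(i)} - b_{(i)}|$. This is exactly the order-independence noted in Section \ref{Volatility_persistence_regime_identification}, and it lets me drop the subscripts and simply assume that both vectors $\mathbf{a}=(a_1,\dots,a_N)$ and $\mathbf{b}=(b_1,\dots,b_N)$ are sorted in increasing order for the remainder.

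The key algebraic step is the identity $|x-y| = x + y - 2\min(x,y)$. Summing over $i$ and using $\sum_i a_i = \sum_i b_i = 1$ turns the objective into $\sum_{i=1}^N |a_i - b_i| = 2 - 2\sum_{i=1}^N \min(a_i,b_i)$, so maximising the distance is equivalent to minimising $\sum_i \min(a_i,b_i)$. I would then bound this sum from below by discarding all but the top term: since every summand is nonnegative, $\sum_i \min(a_i,b_i) \ge \min(a_N,b_N)$. Finally, because the largest entry of any probability vector is at least its mean, $a_N \ge \frac{1}{N}$ and $b_N \ge \frac{1}{N}$, hence $\min(a_N,b_N) \ge \frac{1}{N}$. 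Chaining these gives $\sum_i |a_i-b_i| \le 2 - \frac{2}{N}$ and therefore $D^{vol}(s,t) \le \frac{1}{N}\bigl(2-\frac{2}{N}\bigr) = \frac{2}{N}\bigl(1-\frac{1}{N}\bigr)$, the claimed bound.

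For the equality statement I would trace back through the two inequalities. Equality forces both (i) $\min(a_i,b_i)=0$ for every $i<N$ and (ii) $\min(a_N,b_N)=\frac{1}{N}$. Condition (ii) says the smaller of $a_N,b_N$ equals $\frac{1}{N}$; say $a_N = \frac{1}{N}$. Since the entries are sorted and nonnegative with unit sum, having the maximum equal to the mean forces $a_i = \frac{1}{N}$ for all $i$, i.e.\ $\mathbf{a} = \mathbf{p}_0$. Then $a_i = \frac{1}{N} > 0$ for $i<N$, so condition (i) forces $b_i = 0$ for $i<N$, and the unit-sum constraint gives $b_N = 1$, i.e.\ $\mathbf{b}$ is a one-shot distribution $\mathbf{q}_k$. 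The symmetric case $b_N = \frac{1}{N}$ gives the reverse assignment, so equality holds precisely when $\{\mathbf{p}(s),\mathbf{p}(t)\} = \{\mathbf{p}_0,\mathbf{q}_k\}$ as sets, matching the statement.

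I expect the only genuinely delicate point to be the equality analysis, where one must be careful to rule out degenerate configurations — in particular the case $\mathbf{a}=\mathbf{b}=\mathbf{p}_0$, which satisfies (ii) but violates (i) and indeed realises the minimum distance rather than the maximum. The inequality itself is essentially immediate once the $\min$-reformulation and the ``max $\ge$ mean'' observation are in place; the main modelling care is simply confirming that the one-dimensional Wasserstein distance reduces to the sorted $L^1$ distance, which is what makes the elementary estimate applicable.
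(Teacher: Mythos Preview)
Your proof is correct and follows essentially the same strategy as the paper: both reduce $D^{vol}$ to the sorted $L^1$ sum $\frac{1}{N}\sum_i |a_{(i)}-b_{(i)}|$, then combine the triangle inequality for nonnegative entries with the observation $a_{(N)},b_{(N)}\ge \frac{1}{N}$ to recover the extra $\frac{2}{N^2}$, and finally trace equality back to the uniform/one-shot pair. Your packaging via $|x-y|=x+y-2\min(x,y)$ and the single lower bound $\sum_i \min(a_i,b_i)\ge \min(a_N,b_N)\ge \frac{1}{N}$ is a slightly cleaner way to execute the same two ideas than the paper's explicit split of the $i=N$ term, but the content is identical.
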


\begin{proof}
Let $\mathbf{p}(t)=(a_1,...,a_N)$ be an arbitrary probability vector of volatilities. As discussed in Section \ref{Volatility_persistence_regime_identification}, our implementation of the Wasserstein distance is essentially between vectors treated as sets (possibly with repetition in their elements). Specifically, we associate to each probability vector $\mathbf{p}(t)$ a probability measure defined as a weighted sum of Dirac delta measures
\begin{align}
\label{eq:Wasserstein delta}
    \mu=\frac{1}{N}\sum_{i=1}^N\delta_{a_i}.
\end{align}
When $\mu$ and $\nu$ are probability measures on $\mathbb{R}$ that have cumulative distribution functions $F$ and $G$, there is a simple representation of the Wasserstein distance as follows \cite{DelBarrio}:
\begin{align}
\label{eq:computeWasserstein}
  d^W (\mu,\nu) =  \int_{0}^1 |F^{-1} - G^{-1}| dx,
\end{align}
where $F^{-1}$ and $G^{-1}$ are the quantile functions associated to $F$ and $G$, respectively \cite{Gilchrist2000}. In our scenario, we integrate $\mu$ in (\ref{eq:Wasserstein delta}) to compute $F$. Let $a_{(1)}, a_{(2)},..., a_{(N)}$ be the unique reordering of $a_1,...,a_N$ in non-decreasing order. That is, $a_{(1)} \leq a_{(2)}\leq ...\leq a_{(N)}$ and $\{a_{(1)}, a_{(2)},..., a_{(N)} \} = \{ a_1,...,a_N \}$. Integrating $\mu$, we can see that $F$ is a piecewise-constant increasing step function:
\begin{align}
\label{eq:stepfn}
    F=\sum_{i=1}^{N-1} \frac{i}{n} \mathbbm{1}_{[a_{(i)},a_{(i+1)})} + \mathbbm{1}_{[a_{(N)},\infty)}.
\end{align}
Thus, its associated quantile function is
\begin{align}
    F^{-1}=\sum_{i=1}^{N} a_{(i)} \mathbbm{1}_{(\frac{i-1}{n},  \frac{i}{n}]}. 
\end{align}
Now let $\mathbf{p}(s)=(b_1,...,b_N)$ be an alternative probability vector of volatilities. Let $\nu$, $G$ and $G^{-1}$ be its associated measure, cumulative distribution function and quantile function associated, respectively. In particular,
\begin{align}
    G^{-1}=\sum_{i=1}^{N} b_{(i)} \mathbbm{1}_{(\frac{i-1}{n},  \frac{i}{n}]}. 
\end{align}
Thus
\begin{align}
 d^{W} (\mathbf{p}(s), \mathbf{p}(t)) =  \int_{0}^1 |F^{-1} - G^{-1}| dx\\
 = \int_{0}^1 \left| \sum_{i=1}^{N} (a_{(i)} - b_{(i)}) \mathbbm{1}_{(\frac{i-1}{n},  \frac{i}{n}]} \right| dx\\
= \frac{1}{N} \sum_{i=1}^{N} |a_{(i)} - b_{(i)}|\\
= \frac{1}{N} \sum_{i=1}^{N-1} |a_{(i)} - b_{(i)}| + \frac{1}{N}| (a_{(N)}- \frac{1}{N})  - (b_{(N)} - \frac{1}{N}) |\\
\leq \frac{1}{N} \sum_{i=1}^{N-1} (a_{(i)} + b_{(i)}) + \frac{1}{N}( (a_{(N)}- \frac{1}{N})  + (b_{(N)} - \frac{1}{N}) )\\
= \frac{1}{N} \sum_{i=1}^{N} a_{(i)} + \frac{1}{N} \sum_{i=1}^{N} b_{(i)} - \frac{2}{N^2}\\
=\frac{2}{N} - \frac{2}{N^2}\\
= \frac{2}{N}\left(1 - \frac{1}{N}\right).
\end{align}
These inequalities hold by the triangle inequality, and the non-negativity of $a_{(i)}$, $b_{(i)}$, $a_{(N)}- \frac{1}{N}$ and $b_{(N)}- \frac{1}{N}$. These last two are non-negative as the condition $\sum a_i = 1$ implies $a_{(N)}\geq \frac{1}{N}$.

Finally, suppose equality holds. Then for each $i=1,...,N-1$, we must have $|a_{(i)} - b_{(i)}|= a_{(i)} + b_{(i)}$. This means at least one of $a_{(i)}$ or $b_{(i)}$ is zero, for each $i$. Without loss of generality, suppose $a_{(N-1)}=0$. As $ 0 \leq a_{(1)} \leq a_{(2)}\leq ...\leq a_{(N-1)} \leq a_{(N)}$, this immediately shows $a_{(i)}=0$ for all $i=1,...,N-1$. So $a_{(N)}=1$ and up to reordering, $\mathbf{p}(t)$ is a one-shot probability vector $(1,0,...,0)$. Thus, $\mathbf{p}(t)=\mathbf{q}_k$ for some $k$. 

In addition, equality forces $| (a_{(N)}- \frac{1}{N})  - (b_{(N)} - \frac{1}{N}) | = (a_{(N)}- \frac{1}{N})  + (b_{(N)} - \frac{1}{N})$. As both terms are non-negative, this implies one of $a_{(N)}- \frac{1}{N}, b_{(N)}- \frac{1}{N}$ must be zero. But we've established $a_{(N)}- \frac{1}{N}= 1 - \frac{1}{N}$ is non-zero. Thus $b_{(N)}- \frac{1}{N}$ must be zero. So $b_{(N)}=\frac{1}{N}$. And yet $b_{(N)}$ is the largest of the $b_i$, elements that sum to 1. This implies all the $b_i$ must be equal to $\frac{1}{N}$. So $\mathbf{p}(s)$ must be the uniform distribution $\mathbf{p}_0$. It is then routine to demonstrate that $d^W(\mathbf{q}_k, \mathbf{p}_0)$ attains the theoretical upper bound of $\frac{2}{N}\left(1 - \frac{1}{N}\right)$.
\end{proof}

\begin{remark}
The above proof can be made more conceptual by noticing that the condition $\sum_{i=1}^N a_i = 1$ shows directly that
\begin{align}
    \int_{0}^1 F^{-1}  dx = \frac{1}{N}.
\end{align}
Thus, there is a quick bound on $ \int_{0}^1 |F^{-1} - G^{-1}| dx$ of $\frac{2}{N}$. To tighten the bound, one can notice that the quantile functions $F^{-1}$ and  $G^{-1}$ necessarily have a ``rectangle in common'' in their area under the curve. This rectangle has area $\frac{1}{N^2}$, which provides the strengthening of the bound. If $x$ is the coordinate on the domain of $F^{-1}$ and $G^{-1}$ and $y$ is the coordinate on the codomain, this rectangle can be described as $ 1 - \frac{1}{N} \leq x \leq 1, 0 \leq y \leq \frac{1}{N}$ (so is in fact a square).
\end{remark}

Viewed together, these propositions demonstrate that, in a precise sense, our study of volatility dispersion investigates the extent that real rolling volatility data sits between two extremes: the case where all the volatility of the market is uniformly distributed across every asset, and a case where all volatility is concentrated in a single asset.


\bibliographystyle{spphys} 
\bibliography{_References} 
\end{document}